\documentclass[runningheads]{llncs}

\usepackage[T1]{fontenc}
\usepackage{amsmath,amssymb,mathtools}
\usepackage{algorithm}
\usepackage{algpseudocode}
\usepackage{todonotes}
\usepackage{hyperref}

\newcommand{\E}{\mathbb{E}}
\newcommand{\Prb}{\mathbb{P}}
\newcommand{\HT}{\mathrm{HT}}

\title{Optimizing Both Checking and Update Costs in Random Walk Search}
\titlerunning{Optimizing Checking and Update Costs}

\author{Simon Apers$^1$ \and Marin Costes$^2$}
\authorrunning{S. Apers \and M. Costes}
\institute{$^1$ Université Paris Cité, CNRS, IRIF, Paris, France\\
$^2$ Centre for Quantum Information and Communication, \'Ecole polytechnique de Bruxelles, Universit\'e libre de Bruxelles, Belgium}

\begin{document}
\maketitle

\begin{abstract}
Random walks are a standard tool for search problems in which a state can be updated locally and tested for being marked.
When updating the state and checking whether it is marked have different costs, two classical strategies optimize different parts of the cost: checking after every step is optimal in the number of updates, while repeatedly checking only after mixing is optimal in the number of checks.
For a single marked state \(m\) and a walk started from its stationary distribution \(\pi\), Dohotaru and H{\o}yer stated that both guarantees can be matched simultaneously, for a walk that checks after blocks of a fixed length; their argument is sketched through quantum walks, and they observe that they know of no classical proof.
We give a short and self-contained classical proof of such a tradeoff, for arbitrary irreducible Markov chains.
The algorithm replaces the original transition matrix \(P\) by the averaged walk
\[
    \overline P_\tau = \frac{1}{\tau}\sum_{k=1}^{\tau} P^k,
\]
where \(\tau\) is of order \(\pi(m)\HT(m)\).
Using a coupling with the original walk and Kac's lemma, we prove directly that the averaged walk hits the marked state in \(O(1/\pi(m))\) checks in expectation.
The resulting search cost is
\[
    S + O(\HT(m))U + O(1/\pi(m))C
\]
in expectation, where \(S\), \(U\), and \(C\) denote setup, update, and checking costs.
\keywords{Random walks \and Markov chains \and Search algorithms \and Hitting times}
\end{abstract}

\section{Introduction}

A generic random walk search algorithms may be described as follows.
We have a finite state space \(X\), a Markov chain \(P\) on \(X\), and an unknown set \(M\subseteq X\) of marked states.
The algorithm starts from a sample of the stationary distribution, repeatedly updates the current state according to \(P\), and occasionally checks whether the current state is marked.
This abstraction is particularly useful when states carry data structures: moving to a neighboring state can reuse previous work, while checking whether the current state is marked may have a different cost.

The standard cost model separates three operations.
The setup cost \(S\) is the cost of sampling from the stationary distribution.
The update cost \(U\) is the cost of simulating one step of \(P\).
The checking cost \(C\) is the cost of testing whether the current state is marked.
This model is common in the quantum walk search literature \cite{MNRS11,Santha08}, where it is used to compare classical random walk search strategies and their quantum analogues.
It is already meaningful classically: if \(C\) is large, one wants to check rarely; if \(U\) is large, one wants to avoid unnecessary walk steps.

Two elementary strategies illustrate the tension.
Checking after every step finds a marked state in \(O(\HT(M))\) updates and checks, where \(\HT(M)\) is the hitting time from stationarity.
This uses an optimal number of updates, but can use many more checks than necessary.
On the other hand, if \(P\) is reversible with spectral gap \(\delta\), checking after blocks of length \(O(1/\delta)\) finds a marked state with \(O(1/\pi(M))\) checks, which is optimal as a bound in terms of \(\pi(M)\) alone, but can require \(O(1/(\delta\pi(M)))\) updates, which may be larger than \(\HT(M)\).

For a single marked state \(m\), this tradeoff is avoidable.
The same cost split appears in the work of Dohotaru and H{\o}yer on controlled quantum amplification \cite{DohotaruHoyer17}, as a byproduct of their quantum framework.
In \cite[Section 6]{DohotaruHoyer17} they consider the walk that checks after every \(\ell\) steps of \(P\), with \(\ell\) of the order of \(\pi(m)\HT(m)\), and sketch in one sentence a proof that the quantum hitting time of the Szegedy walk associated with \(P^\ell\) is constant.
Their argument requires \(P\) to be reversible, no classical proof is given, and they explicitly write that they know of no way to obtain the statement without quantum arguments.
Our goal is to provide such a proof, at an elementary level and for arbitrary irreducible chains.
Instead of checking after a fixed number of steps, we draw an independent random integer \(B\) uniformly in \(\{1,\ldots,\tau\}\), perform \(B\) updates, and then check.
Equivalently, we run the averaged Markov chain
\[
    \overline P_\tau = \frac{1}{\tau}\sum_{k=1}^{\tau} P^k.
\]
For \(\tau \geq \pi(m)\HT(m)\), we prove that this averaged chain has hitting time \(O(1/\pi(m))\).
Thus the expected number of checks matches the strategy based on the spectral gap up to a constant, while the expected number of updates is \(O(\tau/\pi(m))=O(\HT(m))\).

\paragraph{Contribution.}
The main technical contribution is the following direct theorem about an averaged Markov chain, stated informally here.

\begin{quote}
Let \(P\) be a finite irreducible Markov chain with stationary distribution \(\pi\), and let \(m\) be a single marked state.
If \(\tau \geq \pi(m)\HT(m)\), then the averaged chain \(\overline P_\tau\) reaches \(m\), from stationarity, after \(O(1/\pi(m))\) steps in expectation.
\end{quote}

The proof is short but somewhat delicate.
We couple the averaged chain to the original walk by viewing one step of \(\overline P_\tau\) as a randomly chosen block of \(P\)-steps.
The first part of the analysis counts the inspections made before the original walk reaches \(m\), which Wald's identity bounds by \(O(\HT(m)/\tau)\).
For the remaining part, the original walk has just hit \(m\), so Kac's lemma (Lemma \ref{lem:kac}) applied to the averaged chain guarantees that it returns to \(m\) after \(1/\pi(m)\) further inspections in expectation.

\paragraph{Quantum motivation.}
The question is motivated by the setup/update/check accounting used in quantum walk search \cite{MNRS11,Santha08}.
In that literature, the two elementary algorithms above are usually presented as intuition for quantum algorithms, and it is through such a quantum framework that Dohotaru and H{\o}yer arrived at the same classical cost split for a unique marked element.
Our contribution is to isolate the corresponding classical mechanism: a random block length, rather than a fixed one, gives a concise proof based on Kac's lemma.

\section{Preliminaries}

Let \(X\) be a finite set and let \(P\) be an irreducible Markov chain on \(X\).
We write \(P(y,x)\) for the probability to move from \(x\) to \(y\), and \((X_t)_{t\geq 0}\) for a trajectory of the chain.
The chain has a unique stationary distribution \(\pi\), satisfying \(P\pi=\pi\).
For \(A\subseteq X\), write \(\pi(A)=\sum_{x\in A}\pi(x)\).
We identify a state \(x \in X\) with the distribution concentrated on \(x\), so that \(P^j x\) denotes the law of \(X_j\) when \(X_0=x\).
For a distribution \(\mu\) on \(X\), we write \(\Prb_\mu\) and \(\E_\mu\) for the law and the expectation of a chain started from \(X_0\sim\mu\), the chain in question being clear from the context.

For \(A\subseteq X\), let
\[
    T_A = \min\{t\geq 0 : X_t\in A\}
\]
be the hitting time of \(A\), and let \(T_m=T_{\{m\}}\) for a single state \(m\).
Unless otherwise specified, hitting times are taken when \(X_0\sim \pi\), and we write
\[
    \HT(A) = \E_\pi[T_A].
\]
For a Markov chain \(Q\) having \(\pi\) as stationary distribution, we write \(\HT_Q(m)\) for the same quantity when the steps are taken according to \(Q\) instead of \(P\).
We use it for a power of \(P\) in Lemma~\ref{lem:check-optimal}, and for the averaged chain in Theorem~\ref{thm:main}.

We use the following standard form of Kac's lemma.

\begin{lemma}[Kac's lemma~\cite{Kac47,LPW17}]
\label{lem:kac}
Let \(Q\) be a finite irreducible Markov chain on \(X\) with stationary distribution \(\pi\), and let \((W_t)_{t\geq 0}\) be a trajectory of \(Q\).
For every state \(m\),
\[
    \E_m[\min\{t>0 : W_t=m\}] = \frac{1}{\pi(m)}.
\]
\end{lemma}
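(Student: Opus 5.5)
The plan is the classical ``cycle trick'': build a stationary measure out of one excursion from \(m\), then use uniqueness of \(\pi\). Write \(\tau_m^+=\min\{t>0 : W_t=m\}\) for the return time. First I show \(\E_m[\tau_m^+]<\infty\). Since \(X\) is finite and \(Q\) is irreducible, there are \(r\geq 1\) and \(\varepsilon>0\) such that from every starting state the chain visits \(m\) within \(r\) steps with probability at least \(\varepsilon\). Applying the Markov property at times \(r,2r,\dots\) gives \(\Prb_x[\tau_m^+>kr]\leq(1-\varepsilon)^k\) for every \(x\). Hence the tail of \(\tau_m^+\) is geometric and all its moments are finite.

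Next I define, for each \(x\in X\),
\[
    \mu(x)=\E_m\Bigl[\sum_{t=0}^{\tau_m^+-1}\mathbf 1\{W_t=x\}\Bigr]=\sum_{t\geq 0}\Prb_m[W_t=x,\ \tau_m^+>t],
\]
the expected number of visits to \(x\) during one excursion from \(m\). By construction \(\mu(m)=1\), since within \(\{0,\dots,\tau_m^+-1\}\) only time \(0\) is a visit to \(m\). Summing over \(x\) gives \(\sum_x\mu(x)=\E_m[\tau_m^+]\), which is finite by the first step.

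The key step is to check that \(Q\mu=\mu\), using the paper's convention that \(Q(y,x)\) is the probability of moving from \(x\) to \(y\). The event \(\{\tau_m^+>t\}\) is determined by \(W_0,\dots,W_t\), so the Markov property gives
\[
    \sum_x Q(y,x)\Prb_m[W_t=x,\tau_m^+>t]=\Prb_m[W_{t+1}=y,\ \tau_m^+>t].
\]
Summing over \(t\geq 0\) yields \((Q\mu)(y)=\E_m\bigl[\sum_{s=1}^{\tau_m^+}\mathbf 1\{W_s=y\}\bigr]\). This counts the visits to \(y\) at times \(1,\dots,\tau_m^+\), whereas \(\mu(y)\) counts them at times \(0,\dots,\tau_m^+-1\). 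The two counts differ only through the endpoints \(s=0\) and \(s=\tau_m^+\), and \(W_0=W_{\tau_m^+}=m\). So the two endpoint contributions cancel (both are \(\mathbf 1\{y=m\}\)) and \((Q\mu)(y)=\mu(y)\).

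Exchanging sums is justified by nonnegativity (Tonelli) together with the finiteness from the first step. I expect this shift-and-cancel argument, and the bookkeeping of which endpoint is counted, to be the main point needing care.

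Finally, \(\mu/\E_m[\tau_m^+]\) is a probability vector fixed by \(Q\). By irreducibility the stationary distribution is unique, so \(\mu=\E_m[\tau_m^+]\,\pi\). Evaluating at \(m\) gives \(1=\E_m[\tau_m^+]\,\pi(m)\), which is the claim.
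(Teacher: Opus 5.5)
Your proof is correct and complete. The paper does not prove this lemma; it only cites \cite{Kac47,LPW17}, and your argument (geometric tail of the return time, the excursion measure \(\mu\) shown to be \(Q\)-invariant by the shift-and-cancel step, then uniqueness of \(\pi\)) is essentially the standard proof in \cite{LPW17}. One small point to state explicitly: ``visits \(m\) within \(r\) steps'' should mean at some time in \(\{1,\dots,r\}\). Irreducibility guarantees this even from the start \(x=m\), so your tail bound really applies to \(\tau_m^+\) and not just to the hitting time.
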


\section{The Search Model}

An instance consists of a finite irreducible Markov chain \(P\) with stationary distribution \(\pi\), together with a marked state \(m\).
The algorithm has access to three operations:
\begin{itemize}
    \item \emph{Setup}: sample \(X_0\sim \pi\), at cost \(S\).
    \item \emph{Update}: from the current state \(x\), sample one transition according to \(P(\cdot,x)\), at cost \(U\).
    \item \emph{Check}: decide whether the current state is \(m\), at cost \(C\).
\end{itemize}
The goal is to output \(m\) with constant probability.
% We focus on the single marked state case.

The two natural costs in this model are the expected number of updates needed to reach \(m\) along \(P\), namely \(\HT(m)\), and the expected number of independent stationary samples needed to see \(m\), namely \(1/\pi(m)\).
The next section records two algorithms that match these bounds separately; the averaged-walk algorithm then matches them simultaneously.

\section{Two Elementary Search Algorithms}
\label{sec:elementary}

Two elementary algorithms optimize opposite parts of the cost.
The first checks after every update.
It uses \(O(\HT(m))\) updates and \(O(\HT(m))\) checks, and therefore has cost
\[
    S + O(\HT(m))(U+C).
\]
This is optimal in the number of updates: the algorithm finds \(m\) at the first visit, and any search that follows \(P\) cannot output \(m\) before that time.
It can be wasteful when \(C\) is large.

The second checks only after long blocks.
If \(P\) is reversible, write \(\lambda_*\) for the largest absolute value of an eigenvalue of \(P\) other than \(1\), and \(\delta=1-\lambda_*\) for the absolute spectral gap.
A block of length \(O(1/\delta)\) is then enough to decorrelate from the previously checked state.
This gives \(O(1/\pi(m))\) checks and cost
\[
    S + O\!\left(\frac{1}{\pi(m)}\right)C
      + O\!\left(\frac{1}{\delta\pi(m)}\right)U.
\]
Unlike the first algorithm, neither the constant success probability nor the optimality of the checking term is immediate; both are stated in Lemma~\ref{lem:check-optimal}.
The update cost can be larger than \(\HT(m)\).

The averaged-walk algorithm that we propose in the next section combines the two guarantees for a single marked state.

\begin{lemma}
\label{lem:check-optimal}
Let \(P\) be a finite irreducible reversible Markov chain with absolute spectral gap \(\delta>0\), and let \(m\) be a marked state.
From a stationary start, the algorithm that repeatedly checks the current state and then applies \(\lceil 1/\delta\rceil\) steps of~\(P\), for \(R=\Theta(1/\pi(m))\) iterations, outputs \(m\) with probability at least \(1/2\).
The number of checks is optimal as a bound in terms of \(\pi(m)\) alone.
\end{lemma}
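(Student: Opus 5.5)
Write \(\ell=\lceil 1/\delta\rceil\) and \(Q=P^\ell\). The algorithm inspects \(Y_0,Y_1,\dots,Y_{R-1}\), where \(Y_i=X_{i\ell}\). This sequence is a stationary trajectory of \(Q\), since \(\pi\) is stationary for \(Q\). The plan is a second moment argument on the number of visits to \(m\) among the inspected states. Let
\[
N=\sum_{i=0}^{R-1}\mathbf 1[Y_i=m].
\]
By stationarity, \(\E_\pi[N]=R\pi(m)\). The Paley--Zygmund inequality \(\Prb(N>0)\ge \E[N]^2/\E[N^2]\) reduces the claim to bounding \(\E[N^2]\). We will show \(\E[N^2]\le c\,(R\pi(m)+R^2\pi(m)^2)\) for an absolute constant \(c\). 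Then choosing \(R=\lceil K/\pi(m)\rceil\) with \(K\) a suitable constant makes \(\Prb(N>0)\ge K^2/(c(K+K^2))\). Since \(c\) depends on the rounding below, \(K\) alone may not push this above \(1/2\). If needed, we amplify by using independent fresh stationary samples (or by enlarging \(R\) by a constant factor, treating disjoint segments as approximately independent). This still gives \(R=\Theta(1/\pi(m))\) iterations.

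To bound \(\E[N^2]\) we expand it as
\[
\E[N^2]=\sum_{i,j}\Prb(Y_i=m,Y_j=m)=R\pi(m)+2\sum_{i<j}\pi(m)\,Q^{j-i}(m,m).
\]
The key estimate uses reversibility. The matrix \(D^{1/2}PD^{-1/2}\) with \(D=\mathrm{diag}(\pi)\) is symmetric, which gives the spectral bound
\[
|P^{k}(m,m)-\pi(m)|\le \lambda_*^{k}\sqrt{\pi(m)/\pi(m)}=\lambda_*^{k}.
\]
Hence \(Q^{s}(m,m)\le \pi(m)+\lambda_*^{s\ell}\le \pi(m)+e^{-s}\), because \(\lambda_*^{\ell}\le(1-\delta)^{1/\delta}\le e^{-1}\). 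Summing over \(s=j-i\ge1\) gives
\[
\sum_{i<j}Q^{j-i}(m,m)\le R^2\pi(m)/2+R\sum_{s\ge1}e^{-s},
\]
which yields \(\E[N^2]\le R\pi(m)(1+2/(e-1))+R^2\pi(m)^2\). With \(R=K/\pi(m)\) this becomes \(\Prb(N>0)\ge K^2/(2.2K+K^2)\), which exceeds \(1/2\) once \(K\ge 2.2\). So no amplification is needed after all.

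For optimality, note that any bound depending only on \(\pi(m)\) must hold for every chain with the given \(\pi(m)\). In particular it must hold for the chain \(P(\cdot,x)=\pi\), which has \(\delta=1\). In that chain every inspected state is an independent \(\pi\)-sample, whether or not updates occur in between. So after \(r\) checks the success probability is at most \(r\pi(m)\) by a union bound, and constant success requires \(\Omega(1/\pi(m))\) checks. I expect the main obstacle to be the spectral bound on the return probability \(P^k(m,m)\), which must be stated carefully with the \(\pi\)-weighted inner product. Everything else is routine.
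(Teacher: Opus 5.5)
Your proof is correct, but for the upper bound it takes a different route from the paper.

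The paper passes to \(Q=P^{\lceil 1/\delta\rceil}\) and notes that its absolute spectral gap is at least \(1-e^{-1}\). It then imports the spectral hitting-time bound for reversible chains \cite[Proposition 1]{MNRS11} to get \(\HT_Q(m)=O(1/\pi(m))\), and concludes with Markov's inequality.

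You instead apply the second moment method to the number of inspected visits to \(m\). Your only spectral input is the return-probability estimate \(|P^k(m,m)-\pi(m)|\le\lambda_*^k\), which follows directly from the spectral decomposition in \(\ell^2(\pi)\). Your arithmetic checks out: \(1+2/(e-1)=(e+1)/(e-1)\approx 2.16\), so \(R=\lceil 2.2/\pi(m)\rceil\) suffices.

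\textbf{What your route buys.} It is self-contained. Because it only uses \(\sum_{s\ge 1}\bigl(Q^s(m,m)-\pi(m)\bigr)=O(1)\), it also carries over to non-reversible aperiodic chains with blocks of length \(t_{\mathrm{mix}}(1/4)\). This goes via \(|P^t(m,m)-\pi(m)|\le d(t)\), where \(d(t)=\max_x\|P^t x-\pi\|_{\mathrm{TV}}\), and \(d(s\,t_{\mathrm{mix}})\le 2^{-s}\). That avoids the logarithmic loss in the remark following the lemma.

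\textbf{What the paper's route buys.} It bounds the \emph{expected} number of checks, parallel to Theorem~\ref{thm:main}. Your argument as written only bounds the success probability within \(R\) checks. Your estimate would also give the expectation, through the identity \(\pi(m)\HT_Q(m)=\sum_{t\ge 0}\bigl(Q^t(m,m)-\pi(m)\bigr)\), which yields \(\HT_Q(m)\le e/((e-1)\pi(m))\).

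\textbf{Two cleanups.} First, drop the hedge about amplifying with fresh stationary samples or ``approximately independent'' segments. The first option would change the algorithm, the second is unjustified, and neither is needed. Second, in the lower bound, inspected states are not independent if no update separates them. However, your union bound only needs each inspected state to be marginally \(\pi\)-distributed, so the argument stands; it is a mild generalization of the paper's complete-graph example.
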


\begin{proof}
Note first that \(\delta>0\) forces \(P\) to be aperiodic, since a periodic chain has an eigenvalue of modulus \(1\) other than \(1\).
Consider the walk \(Q=P^{\lceil 1/\delta\rceil}\).
Its absolute spectral gap is
\[
    \delta' = 1-\lambda_*^{\lceil 1/\delta\rceil}
    \geq 1-(1-\delta)^{1/\delta}
    \geq 1-e^{-1},
\]
hence bounded below by a positive constant.
Since \(Q\) is again reversible with respect to \(\pi\), the standard spectral bound on hitting times of reversible chains yields \(\HT_Q(m)=O(1/(\delta'\pi(m)))=O(1/\pi(m))\) \cite[Proposition 1]{MNRS11}.
By Markov's inequality, \(R=\lceil 2\HT_Q(m)\rceil\) checks suffice for success probability at least \(1/2\).

It remains to see that no bound \(o(1/\pi(m))\) on the number of checks can hold for every walk.
Let \(X\) be the complete graph on \(n\) vertices, with uniform transitions \(P(y,x)=1/n\) for all \(x,y\), and let \(m\) be a single marked vertex, so that \(\pi(m)=1/n\).
From every state, one step of \(P\) is already an independent sample from \(\pi\).
Thus updates do not create any search advantage over testing stationary samples, and each check of a freshly sampled state succeeds with probability \(1/n\), independently of previous unsuccessful checks.
The number of checks is therefore geometric of parameter \(1/n\), and \(\Omega(1/\pi(m))\) checks are necessary in expectation. $\qed$
\end{proof}

Reversibility is used at one single place, namely the bound \(\HT_Q(m)=O(1/\pi(m))\) for a chain with constant spectral gap, which rests on the spectral decomposition of a transition matrix that is self-adjoint in \(\ell^2(\pi)\).
This is the classical estimate used by Magniez, Nayak, Roland and Santha for symmetric chains \cite{MNRS11}.
For a chain that is not reversible, \(P\) need not be normal, and the modulus of its eigenvalues no longer controls hitting times; the same algorithm can still be used with blocks long enough for the current distribution to be within total variation distance \(\pi(m)/2\) of \(\pi\), which again gives \(O(1/\pi(m))\) checks, at the price of a logarithmic factor in the length of the blocks.
Nothing of this is needed in the next section: our Theorem~\ref{thm:main} holds for every finite irreducible chain.

\section{Averaged-Walk Search}

Fix an integer \(\tau\geq 1\), and define
\[
    \overline P_\tau = \frac{1}{\tau}\sum_{k=1}^{\tau}P^k.
\]
One step of \(\overline P_\tau\) is implemented by drawing \(B\) uniformly from \(\{1,\ldots,\tau\}\) and performing \(B\) consecutive updates of \(P\).
Since \(\pi\) is stationary for \(P\), it is also stationary for \(\overline P_\tau\). We now introduce our search algorithm using this averaged walk, it simply applies \(\overline P_\tau\) and checks the state after each step.

\begin{algorithm}
\caption{Averaged-walk search}
\label{alg:averaged-search}
\begin{algorithmic}[1]
\State \textbf{input:} irreducible Markov chain \(P\) with stationary distribution \(\pi\), marked state \(m\), block length \(\tau\) and number of checks \(R\)
\State \textbf{setup:} sample \(X_0\sim \pi\)
\For{\(r=0,1,\ldots,R-1\)}
    \State \textbf{check} whether the current state is \(m\), and stop if so
    \State draw \(B_r\) uniformly from \(\{1,\ldots,\tau\}\)
    \State \textbf{update} by applying \(B_r\) times the transition matrix \(P\)
\EndFor
\end{algorithmic}
\end{algorithm}

We now prove the main theorem.

\begin{theorem}
\label{thm:main}
Let \(P\) be a finite irreducible Markov chain with stationary distribution \(\pi\), and let \(m\in X\).
Let \(\tau\) be an integer satisfying
\[
    \tau \geq \pi(m)\HT(m).
\]
Then the averaged chain \(\overline P_\tau\), started from \(\pi\), satisfies
\[
    \HT_{\overline P_\tau}(m)
    \leq \frac{2\HT(m)}{\tau+1} + 2 + \frac{e}{\pi(m)}
    \leq \frac{e+4}{\pi(m)}.
\]
Consequently, when fixing \(\tau=\lceil \pi(m)\HT(m)\rceil\) and \(R=\lceil \frac{2(e+4)}{\pi(m)}\rceil\), Algorithm~\ref{alg:averaged-search} finds \(m\) with constant probability using \(O(1/\pi(m))\) checks and \(O(\HT(m))\) updates.% in expectation.
\end{theorem}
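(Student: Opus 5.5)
The plan is to couple the averaged chain with a single trajectory of \(P\). Let \((X_t)_{t\geq 0}\) be a \(P\)-trajectory with \(X_0\sim\pi\). Let \(B_1,B_2,\ldots\) be i.i.d.\ uniform on \(\{1,\ldots,\tau\}\), independent of \((X_t)\). Put \(S_0=0\), \(S_r=B_1+\cdots+B_r\), and \(W_r=X_{S_r}\). Then \((W_r)\) is a trajectory of \(\overline P_\tau\) from \(\pi\), and \(N=\min\{r: W_r=m\}\) has expectation \(\HT_{\overline P_\tau}(m)\). Let \(T=T_m\) be the hitting time of the \(P\)-walk, and let \(K=\min\{r : S_r\geq T\}\) be the first inspection at or after \(T\). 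I will split \(N\leq K+(N-K)\) and bound the two expectations separately.

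\emph{First term.} Condition on the trajectory \((X_t)\), so that \(T\) is a fixed integer. Then \(K\) is a stopping time for the i.i.d.\ sequence \((B_r)\), and \(S_K\leq T+\tau-1\) because the previous partial sum is below \(T\). Wald's identity gives \(\E[K]\,\frac{\tau+1}{2}=\E[S_K]\leq \E[T]+\tau\). Hence
\[
\E[K]\leq \frac{2\HT(m)}{\tau+1}+2,
\]
which produces the first two terms of the bound.

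\emph{Second term.} After inspection \(K\), the chain \(W\) is a \(\overline P_\tau\)-chain started from \(W_K=X_{S_K}\). I write \(S_K=T+J\), where the overshoot \(J\in\{0,\ldots,\tau-1\}\) depends only on the renewal process once \(T\) is fixed. By the strong Markov property at \(T\), where \(X_T=m\), the state \(X_{T+J}\) has law \(\sum_j \Prb(J=j)\,P^j m\). The main obstacle is controlling this law, since \(J\) is not uniform. I would first try to show \(\Prb(J=j)\leq e/\tau\) for every \(j\), via the renewal probabilities \(u(n)=\Prb(n\in\{S_r\})\). These satisfy \(u(0)=1\) and \(u(n)=\frac1\tau\sum_{k=1}^{\min(n,\tau)}u(n-k)\), which gives \(u(n)=\frac1\tau(1+\frac1\tau)^{n-1}\) for \(1\leq n\leq\tau\). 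Hence \(u(n)\leq e/\tau\) for all \(n\geq1\), the bound beyond \(\tau\) following by induction. Since \(\{J=j\}\subseteq\{T+j\text{ is a renewal time}\}\), the bound \(\Prb(J=j)\leq e/\tau\) follows for \(T+j\geq1\). The case \(T=0\) (start at \(m\)) is handled trivially, because then \(N=0\). I will have to treat the \(j=0\) term separately, where \(W_K=m\) and no further inspection is needed.

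Summing over \(j\geq1\) then gives the domination
\[
\mathcal L(W_K)\mathbf 1_{J\geq1}\leq e\cdot\frac1\tau\sum_{k=1}^{\tau}P^k m=e\,\overline P_\tau m .
\]
Write \(h(x)\) for the expected \(\overline P_\tau\)-hitting time of \(m\) from \(x\). Then \(\E[N-K]\leq e\,\E_{\overline P_\tau m}[h]\). Kac's lemma (Lemma~\ref{lem:kac}) for \(\overline P_\tau\), which is irreducible and stationary for \(\pi\), gives \(1+\E_{\overline P_\tau m}[h]=1/\pi(m)\). So \(\E[N-K]\leq e/\pi(m)\). For the second displayed inequality I use \(\tau\geq\pi(m)\HT(m)\), which gives \(2\HT(m)/(\tau+1)\leq 2/\pi(m)\), together with \(2\leq 2/\pi(m)\).

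\emph{Algorithmic consequence.} Markov's inequality with \(R=\lceil 2(e+4)/\pi(m)\rceil\) gives success probability at least \(1/2\). For updates, the number performed before stopping is \(S_{\min(N,R)}\), and Wald again gives expectation \(\frac{\tau+1}{2}\E[\min(N,R)]=O(\tau/\pi(m))\). With \(\tau=\lceil\pi(m)\HT(m)\rceil\), this is \(O(\HT(m))\) provided \(1/\pi(m)=O(\HT(m))\). That holds because \(\HT(m)\geq(1-\pi(m))/\pi(m)\), using \(\pi(x)\geq(1-\pi(m))\)-weighted lower bounds or simply the fact that one check per step needs about \(1/\pi(m)\) steps. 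The degenerate case \(\pi(m)\) close to \(1\), where everything is \(O(1)\), is handled trivially.
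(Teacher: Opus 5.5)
Your proof is correct and is essentially the paper's argument: the same block coupling, the same split at the first inspection $K$ (the paper's $r_0$) at or after $T_m$, Wald's identity for $\E[K]$, the renewal bound $u(n)\le e/\tau$ (the paper's Lemma~\ref{lem:checkpoints}, proved by the same recursion), and Kac's lemma for $\overline P_\tau$ applied to the law of $W_K$ on $\{J\ge 1\}$, which is dominated by $e\,\overline P_\tau m$.

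The only slip is in your side remark on updates. The inequality $\HT(m)\ge(1-\pi(m))/\pi(m)$ is false, since the deterministic $n$-cycle has $\HT(m)=(n-1)/2$. The bound $\HT(m)\ge(1-\pi(m))/(2\pi(m))$ does hold and suffices. Even more simply, when $\pi(m)\HT(m)\le 1$ you have $\tau=1$, so the algorithm checks after every step and uses at most $\HT(m)$ updates in expectation; the paper itself passes over this case.
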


For the proof we have to couple the original chain with the averaged chain. Let \(B_1,B_2,\ldots\) be independent uniform random variables on \(\{1,\ldots,\tau\}\), which we call the block lengths, and let
\[
    \sigma_r = \sum_{k=1}^r B_k, \qquad \sigma_0=0.
\]
We couple the original chain \((X_t)_{t\geq 0}\), started from \(X_0\sim\pi\), with the averaged chain by setting
\[
    Z_r = X_{\sigma_r}.
\]
Then \((Z_r)_{r\geq 0}\) is exactly the chain \(\overline P_\tau\).
Let
\[
    Y_t = \begin{cases} 1 & \text{if } \exists r\geq 0 : \sigma_r=t \\ 0 & \text{otherwise}
        \end{cases}
\]
indicate whether time \(t\) of the original walk is inspected by the averaged walk; we call the times \(\sigma_r\) the inspected times.

\begin{lemma}
\label{lem:checkpoints}
For every \(t\geq 1\),
\[
    \Prb(Y_t=1) \leq \frac{e}{\tau}.
\]
\end{lemma}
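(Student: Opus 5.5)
We need \(\Prb(Y_t=1)\le e/\tau\) for every \(t\ge 1\), where \(Y_t\) indicates that \(t\) is a partial sum of i.i.d.\ uniform variables on \(\{1,\ldots,\tau\}\). This is a renewal sequence. Write \(u_t=\Prb(Y_t=1)\), with \(u_0=1\). Conditioning on the first block length gives the renewal equation
\[
    u_t=\frac{1}{\tau}\sum_{k=1}^{\min(t,\tau)} u_{t-k},\qquad t\ge 1 .
\]
The plan is to prove the bound by strong induction on \(t\), treating the early range \(1\le t\le\tau\) separately from \(t>\tau\).

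\emph{Early range.} For \(1\le t\le \tau\) the recursion has no truncation at \(\tau\), so \(u_t=\frac1\tau\sum_{s=0}^{t-1}u_s\). Hence \(u_1=1/\tau\), and for \(t\ge 2\),
\[
    u_t=u_{t-1}+\frac{u_{t-1}}{\tau}=\left(1+\frac1\tau\right)u_{t-1}.
\]
Therefore \(u_t=\frac1\tau(1+\frac1\tau)^{t-1}\). For \(t\le\tau\) this is at most \(\frac1\tau(1+\frac1\tau)^{\tau}\le e/\tau\).

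\emph{Later range.} For \(t>\tau\) the recursion is a genuine average, \(u_t=\frac1\tau\sum_{k=1}^{\tau}u_{t-k}\). All indices \(t-k\) lie in \([t-\tau,t-1]\). If \(t-\tau\ge 1\), every term satisfies \(u_{t-k}\le e/\tau\) by the induction hypothesis, so \(u_t\le e/\tau\) as well.

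The one delicate point is that \(u_0=1\) exceeds \(e/\tau\) when \(\tau\ge 3\). So we must check that \(u_0\) never enters an average taken at a time \(t>\tau\). It could only do so through \(k=t\), which requires \(t\le\tau\), so it never does. The induction therefore closes.

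Alternatively, one can avoid the case split: the sequence \(\max_{t-\tau\le s\le t-1}u_s\) is nonincreasing once \(t>\tau\), and its value at \(t=\tau+1\) is \(u_\tau\le e/\tau\).

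I expect the only real obstacle to be handling this exceptional initial term \(u_0\), since the naive average bound fails for the first \(\tau\) steps. The explicit computation in the early range resolves it, and that computation is also where the constant \(e\), through \((1+1/\tau)^\tau\le e\), comes from.
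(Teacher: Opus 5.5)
Your proof is correct and follows the same structure as the paper's: the explicit formula \(u_t=\frac1\tau\bigl(1+\frac1\tau\bigr)^{t-1}\) for \(1\le t\le\tau\), then strong induction for \(t>\tau\) using that \(u_t\) is the average of \(u_{t-1},\ldots,u_{t-\tau}\), all of which have index at least \(1\). The differences are cosmetic: you obtain the early-range formula by telescoping the renewal recursion rather than by the paper's sum over subsets of inspected times, and you condition on the first block rather than on the last inspected time \(L_t\), which gives the same recursion.
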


\begin{proof}
We first prove the bound for \(1\leq t\leq \tau\).
In this range, any subset of \(\{1,\ldots,t-1\}\) can be the set of inspected times before \(t\), because all block lengths are at most \(\tau\). We can then decompose the event \(Y_t=1\) according to the set of inspected times before \(t\). 
%If there are \(p\) inspected times before \(t\), then the \(p+1\) gaps ending at \(t\) are fixed, and each has probability \(1/\tau\).
Therefore
\[
    \Prb(Y_t=1)
    = \sum_{p=0}^{t-1} \binom{t-1}{p}\tau^{-(p+1)}
    = \frac{1}{\tau}\left(1+\frac{1}{\tau}\right)^{t-1}
    \leq \frac{e}{\tau}.
\]

Now assume \(t>\tau\), and suppose by induction that the bound holds for all smaller times.
Let
\[
    L_t = \max\{s<t : Y_s=1\}
\]
be the last inspected time before \(t\). %, with the convention \(L_t=0\) if there is no such time.
If \(Y_t=1\), then necessarily \(t-L_t\in\{1,\ldots,\tau\}\), and the next block after \(L_t\) must have length \(t-L_t\).
Thus, decomposing according to the distance \(d=t-L_t\),
\[
\begin{aligned}
    \Prb(Y_t=1)
    &= \sum_{d=1}^{\tau}
        \Prb(Y_t=1 \ \text{and}\ L_t=t-d) \\
    %&= \sum_{d=1}^{\tau}
        %\Prb(Y_{t-d}=1)\Prb(\text{the next jump after }t-d\text{ has length }d) \\
    &= \frac1{\tau}\sum_{d=1}^{\tau}\Prb(Y_{t-d}=1).
\end{aligned}
\]
By the induction hypothesis,
\[
    \Prb(Y_t=1)
    \leq \frac1{\tau}\sum_{d=1}^{\tau}\frac{e}{\tau}
    = \frac{e}{\tau}. \qquad\qquad\qquad \qed
\]
\end{proof}

We will also be using the following version of Wald's identity \cite{Wald_Identity}.

\begin{lemma}[{Wald's identity}] \label{lem:wald}
Let $(X_n)_{n \in \mathbb{N}}$ be a sequence of real-valued, independent and identically distributed random variables and let $N \geq 0$ be an integer-valued random variable that is independent of $(X_n)_{n \in \mathbb{N}}$.
If $N$ and $X_n$ have finite expectations, then
\[
\E[X_1+\dots+X_N]
= \E[N] \E[X_1].
\]
\end{lemma}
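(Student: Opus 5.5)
We prove Wald's identity in the form of Lemma~\ref{lem:wald} by conditioning on the value of \(N\), using that \(N\) is independent of the whole sequence. Write \(S_N = X_1+\dots+X_N\), with \(S_0=0\). The plan is to first treat nonnegative summands, where all interchanges are justified by monotone convergence or Tonelli's theorem. We then pass to general real-valued \(X_n\) through the decomposition \(X_n = X_n^+ - X_n^-\).

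\emph{Step 1 (nonnegative case).} Assume \(X_n\geq 0\). Then
\[
    S_N = \sum_{n=0}^{\infty} \mathbf 1\{N=n\}\, S_n,
\]
and since every term is nonnegative, expectation commutes with the sum. Independence of \(N\) from \((X_n)\) gives \(\E[\mathbf 1\{N=n\}S_n]=\Prb(N=n)\,\E[S_n]\). Identical distribution gives \(\E[S_n]=n\,\E[X_1]\). Hence
\[
    \E[S_N]=\sum_{n\ge 0}\Prb(N=n)\, n\,\E[X_1]=\E[N]\,\E[X_1].
\]
No finiteness is needed at this stage, apart from the convention \(0\cdot\infty=0\).

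\emph{Step 2 (general case).} Apply Step 1 to the i.i.d.\ sequences \((X_n^+)\) and \((X_n^-)\). Both remain independent of \(N\), since they are measurable functions of the \(X_n\). This gives \(\E[\sum_{k\le N}X_k^{\pm}]=\E[N]\,\E[X_1^{\pm}]\), and both quantities are finite because \(\E[N]<\infty\) and \(\E|X_1|<\infty\). In particular \(\E|S_N|\le \E[\sum_{k\le N}|X_k|]<\infty\), so \(S_N\) is integrable. Linearity then yields
\[
    \E[S_N]=\E[N](\E[X_1^+]-\E[X_1^-])=\E[N]\,\E[X_1].
\]

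The only delicate point is the interchange of expectation with the infinite sum over \(n\) in Step 1. This is why we first reduce to nonnegative variables rather than applying dominated convergence directly. Once Step 1 is in place, integrability of \(S_N\) comes for free, and the general case is just linearity.
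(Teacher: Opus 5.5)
Your proof is correct for the lemma as stated. Tonelli justifies the interchange in Step~1, independence of $N$ from the whole sequence gives $\E[\mathbf 1\{N=n\}S_n]=\Prb(N=n)\,\E[S_n]$ even when $\E[X_1]=\infty$, and the split $X_n=X_n^+-X_n^-$ in Step~2 is handled properly.

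The paper gives no proof of its own here; it only cites Wald's identity, whose classical form is for stopping times. That difference matters. The paper's only use of Lemma~\ref{lem:wald}, in the proof of Theorem~\ref{thm:main}, applies it to $r_0=\min\{r:\sigma_r\ge i\}$. This is a stopping time for $(B_r)$ but is not independent of it, so neither the lemma as stated nor your argument covers that application. Your key step genuinely fails there. With $\tau=i=2$ one has $\{r_0=1\}=\{B_1=2\}$, so $\E[\mathbf 1\{r_0=1\}\sigma_1]=1$, while $\Prb(r_0=1)\,\E[\sigma_1]=3/4$.

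The fix costs nothing: write instead $S_N=\sum_{n\ge 1}X_n\mathbf 1\{N\ge n\}$. You then only need the event $\{N\ge n\}$ to be independent of $X_n$. This holds when $N$ is independent of the sequence, and also when $N$ is a stopping time, since then $\{N\ge n\}=\{N\le n-1\}^c$ is determined by $X_1,\dots,X_{n-1}$. Your Tonelli step and your $X^\pm$ reduction go through verbatim, with $\sum_{n\ge 1}\Prb(N\ge n)=\E[N]$ replacing $\sum_n n\,\Prb(N=n)$. This gives a self-contained proof of the version the paper actually needs.
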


We now turn to the proof of our main theorem.

\begin{proof}[Theorem~\ref{thm:main}]
Recall that \(T_m=\min\{t\geq 0 : X_t=m\}\), and let
\[
    N = \min\{r\geq 0 : Z_r=m\}
\]
be the corresponding hitting time among the inspected times, so that \(\E[N]=\HT_{\overline P_\tau}(m)\).
Let \(r_0=\min\{r\geq 0: \sigma_r\geq T_m\}\) be the index of the first inspection occurring at or after the first visit of the original walk to \(m\), and let \(N'=\min\{s\geq 0 : Z_{r_0+s}=m\}\).
Every inspection of index \(r<r_0\) occurs at a time \(\sigma_r<T_m\), at which the original walk is not at \(m\).
Hence \(N\geq r_0\), and
\[
    \E[N] = \E[r_0] + \E[N'].
\]

We first bound \(\E[r_0]\).
The block lengths are independent of the walk, so we may condition on \(T_m=i\).
Then \(r_0\) is a stopping time for the sequence \((B_r)_{r\geq 1}\), and \(r_0\leq i\) because every block has length at least \(1\).
By Wald's identity (Lemma~\ref{lem:wald}) and \(\E[B_1]=(\tau+1)/2\),
\[
    \frac{\tau+1}{2}\,\E[r_0\mid T_m=i]
    = \E[\sigma_{r_0}\mid T_m=i]
    \leq i+\tau-1,
\]
where the inequality follows from \(\sigma_{r_0-1}<i\) and \(B_{r_0}\leq\tau\).
Averaging over \(i\) and using \(\E[T_m]=\HT(m)\),
\[
    \E[r_0]
    \leq \frac{2\bigl(\HT(m)+\tau-1\bigr)}{\tau+1}
    \leq \frac{2\HT(m)}{\tau+1} + 2.
\]

It remains to bound \(\E[N']\).
For a time \(i\), let
\[
    \Delta_i = \min\{j\geq 0 : Y_{i+j}=1\}
\]
be the delay until the next inspected time.
Since all block lengths are at most \(\tau\), every interval of \(\tau\) consecutive times contains an inspected time, and hence \(\Delta_i\leq \tau-1\).
Condition on the event \(T_m=i\) and \(\Delta_i=j\).
Since there is a unique marked element, the original walk is at \(m\) at time \(i\), and has not visited \(m\) before.
The next inspection occurs at time \(i+j\), so the inspected state is \(X_{i+j}\).
By the Markov property, this state has law \(P^j m\).
The inspected times are independent of \((X_t)\), so from that inspection onward the averaged chain is distributed as \(\overline P_\tau\) started from \(P^j m\), and therefore
\[
    \E[N' \mid T_m=i,\ \Delta_i=j] = \E_{P^j m}[N].
\]
Using also the independence of the inspected times and the Markov chain, and the fact that all terms are nonnegative, we get
\[
    \E[N']
    =
    \sum_{i\geq 0}\Prb(T_m=i)
    \sum_{j=0}^{\tau-1}\Prb(\Delta_i=j)\,\E_{P^j m}[N].
\]
The term \(j=0\) vanishes since \(\E_m[N]=0\), and for \(j\geq 1\) the event \(\Delta_i=j\) implies \(Y_{i+j}=1\), so that Lemma~\ref{lem:checkpoints} applies and gives
\[
    \E[N']
    \leq
    \frac{e}{\tau}
    \sum_{j=1}^{\tau-1} \E_{P^j m}[N]
    \leq
    \frac{e}{\tau}
    \sum_{j=1}^{\tau} \E_{P^j m}[N].
\]
The last average is controlled by Kac's lemma applied to \(\overline P_\tau\).
This is legitimate: \(\pi\) is stationary for \(\overline P_\tau\), and \(\overline P_\tau\geq \frac1\tau P\) entrywise, so \(\overline P_\tau\) is irreducible as well.
After one step from \(m\), the averaged chain reaches the distribution \(\overline P_\tau m=\frac1\tau\sum_{j=1}^{\tau}P^j m\), so
\[
    \E_m[\min\{r>0: Z_r=m\}]
    =
    1 + \frac1\tau\sum_{j=1}^{\tau}\E_{P^j m}[N].
\]
By Lemma~\ref{lem:kac}, the left-hand side equals \(1/\pi(m)\).
Thus
\[
    \frac1\tau\sum_{j=1}^{\tau}\E_{P^j m}[N]
    \leq \frac1{\pi(m)},
\]
and so \(\E[N']\leq e/\pi(m)\).
Combining the two bounds proves the first inequality.
The second follows from \(\tau\geq \pi(m)\HT(m)\) and \(\pi(m)\leq 1\). $\qed$
\end{proof}

\begin{corollary}
\label{cor:cost}
Choose \(\tau\geq \pi(m)\HT(m)\) and run Algorithm~\ref{alg:averaged-search} with
\[
    R \geq \frac{2(e+4)}{\pi(m)}.
\]
Then the algorithm outputs \(m\) with probability at least \(1/2\).
\end{corollary}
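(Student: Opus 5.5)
The corollary follows from Theorem~\ref{thm:main} by Markov's inequality. The first step is to identify the number of checks performed by Algorithm~\ref{alg:averaged-search} with the hitting time of the averaged chain. Under the coupling \(Z_r = X_{\sigma_r}\), the state checked at iteration \(r\) is exactly \(Z_r\), and \((Z_r)_{r\geq 0}\) is a trajectory of \(\overline P_\tau\) started from \(\pi\). The algorithm therefore outputs \(m\) if and only if \(N=\min\{r\geq 0 : Z_r=m\}\) satisfies \(N\leq R-1\), equivalently \(N<R\). So it suffices to show \(\Prb(N\geq R)\leq 1/2\).

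Since \(\tau\geq \pi(m)\HT(m)\), Theorem~\ref{thm:main} gives
\[
\E[N]=\HT_{\overline P_\tau}(m)\leq \frac{e+4}{\pi(m)}.
\]
Markov's inequality then yields
\[
\Prb(N\geq R)\leq \frac{\E[N]}{R}\leq \frac{(e+4)/\pi(m)}{2(e+4)/\pi(m)}=\frac12 .
\]
Hence the algorithm outputs \(m\) with probability at least \(1/2\).

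The only point needing care is the off-by-one in the loop: the checks are made at indices \(r=0,\ldots,R-1\), so success corresponds to \(N\leq R-1\), and Markov's inequality is applied to the event \(\{N\geq R\}\). A second point is that the algorithm stops once it sees \(m\), while the theorem is stated for the unstopped chain. This causes no difficulty, because stopping does not change the trajectory up to and including time \(N\). No step here is a real obstacle; everything rests on Theorem~\ref{thm:main}.
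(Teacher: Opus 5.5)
Your proposal is correct and follows the paper's argument: apply Markov's inequality to the hitting time \(N\) of the averaged chain, using the bound \(\E[N]\leq (e+4)/\pi(m)\) from Theorem~\ref{thm:main}. Your remarks that the checks at \(r=0,\ldots,R-1\) mean failure is exactly \(\{N\geq R\}\), and that stopping at \(m\) does not alter the trajectory up to \(N\), are correct details the paper leaves implicit.
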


\begin{proof}
Let \(N\) be the hitting time of \(m\) for the averaged chain, as in the proof of Theorem~\ref{thm:main}.
By Markov's inequality and Theorem~\ref{thm:main},
\[
    \Prb(N\geq R)
    \leq \frac{\E[N]}{R}
    \leq \frac{(e+4)/\pi(m)}{R}
    \leq \frac12. \qquad\qquad \qed
\]
\end{proof}

\section{Cost and Optimality}

Choose \(\tau=\lceil \pi(m)\HT(m)\rceil\).
By Corollary~\ref{cor:cost}, \(R=\Theta(1/\pi(m))\) iterations of Algorithm~\ref{alg:averaged-search} are enough for constant success probability.
Each iteration performs one check and an expected \((\tau+1)/2\) updates.
The expected cost is therefore
\[
    S
    + O\!\left(\frac{1}{\pi(m)}\right)C
    + O\!\left(\frac{\tau}{\pi(m)}\right)U
    =
    S
    + O\!\left(\frac{1}{\pi(m)}\right)C
    + O(\HT(m))U.
\]

This matches the two elementary algorithms at once: \(O(\HT(m))\) updates, as when checking after every step, and \(O(1/\pi(m))\) checks, as when checking after mixing (Lemma~\ref{lem:check-optimal}).
The theorem should therefore be read as matching both costs at once, for a fixed chain \(P\).

\paragraph{Example: the cycle.}
Consider the lazy random walk on the cycle with \(n\) vertices and one marked vertex \(m\).
The stationary distribution is uniform, so \(\pi(m)=1/n\).
The hitting time from stationarity is \(\Theta(n^2)\), while the spectral gap is \(\Theta(1/n^2)\).
Table~\ref{tab:cycle} compares the two elementary algorithms with the averaged walk.
Checking after every step is optimal in the number of updates, but uses \(\Theta(n^2)\) checks.
Checking after mixing is optimal in the number of checks, but uses \(\Theta(n^3)\) updates, a factor \(n\) above the hitting time.
The averaged walk chooses \(\tau=\Theta(n)\) and matches both: \(\Theta(n)\) checks and \(\Theta(n^2)\) updates.

\begin{table}[ht]
\centering
\caption{Expected number of checks and updates on the cycle, up to constant factors, for constant success probability.}
\label{tab:cycle}
\begin{tabular}{lcc}
\hline
Algorithm & Checks & Updates \\
\hline
Check after every step & \(n^2\) & \(n^2\) \\
Check after mixing & \(n\) & \(n^3\) \\
Averaged walk & \(n\) & \(n^2\) \\
\hline
\end{tabular}
\end{table}

\section{Limitations and Related Work}

The single marked state assumption is essential for the proof.
After the original walk first hits \(m\), the state is exactly known, and the remaining expectation can be related to the return time to \(m\).
For a marked set \(M\) with several states, the distribution of the first hit inside \(M\) can be biased in a way that is not captured only by \(\pi(M)\), and the Kac argument no longer applies directly.
Finding a useful multi-marked analogue, or a counterexample showing that none exists at this level of generality, is a natural open question.

The setup/update/check cost model comes from quantum walk search, especially the framework of Magniez, Nayak, Roland, and Santha \cite{MNRS11}.
Their work compares classical random walk search procedures with quantum analogues and obtains quantum speedups in terms of the stationary marked mass and spectral gap.
Later quantum walk frameworks refine the dependence on hitting time and checking cost, including controlled amplification and unified search frameworks \cite{DohotaruHoyer17,AGJ21}.
In particular, Dohotaru and H{\o}yer state a classical walk for a unique marked element using \(O(\HT(m))\) updates and \(O(1/\pi(m))\) checks, obtained by checking every \(\ell\) steps for a fixed \(\ell\) of the order of \(\pi(m)\HT(m)\).
Their justification consists of a proof sketch asserting that the quantum hitting time of the Szegedy walk of \(P^\ell\) is constant, so it relies on reversibility of \(P\) and on the correspondence between classical and quantum hitting times for a unique marked element.
It also depends on \(P^\ell\) being a well-behaved chain, which a fixed power need not be: for the simple random walk on an even cycle and an even \(\ell\), the chain \(P^\ell\) preserves the parity of the vertex, so from half of the stationary starts it never reaches \(m\).
Randomizing the block length removes both restrictions.
Our contribution is thus a classical proof of the tradeoff, valid for every irreducible chain: the averaged chain has hitting time \(O(1/\pi(m))\) as soon as \(\tau\geq \pi(m)\HT(m)\).

Kac's lemma is a classical recurrence theorem for Markov chains and ergodic systems \cite{Kac47,LPW17}.
Our use of it is elementary but central: the averaged chain is constructed so that missed visits to the marked state can be charged to a return time of the same chain.
Averaging consecutive powers of a Markov chain is also closely related to standard techniques for removing periodicity and defining averaged mixing times, although our objective is hitting a particular state with sparse inspections rather than convergence to stationarity.

\section{Conclusion}

We gave a simple Markov-chain proof of the classical random walk search tradeoff for the case of one marked state and separate update and checking costs.
The algorithm checks after a uniformly random number of walk steps rather than after a fixed block length.
This changes the inspected subsequence enough to guarantee \(O(1/\pi(m))\) checks, while preserving \(O(\HT(m))\) updates.

The main open directions are to understand the right analogue for multiple marked states, remove or reduce the need to know \(\pi(m)\HT(m)\), and determine whether the theorem is useful as a black-box classical component in quantum walk search constructions.

\section*{Acknowledgements}
This publication was made possible through the support of the ID 63683 from the John Templeton Foundation, as part of the “WithOut SpaceTime”
Project (WOST). The opinions expressed in this publication are those of the authors and do not necessarily reflect the views of the John Templeton Foundation. This work was also supported by the F.R.S.-FNRS under project CHEQS within the Excellence of Science (EOS) program.

\bibliographystyle{splncs04}
\bibliography{biblio}

\end{document}